\newtheorem{theorem}{Theorem}
\newtheorem{corollary}{Corollary}[theorem]
\begin{document}
\normalem

\title{Genuine multipartite entanglement is not a precondition for secure conference key agreement}
\author{Giacomo Carrara}
\email{carrara@uni-duesseldorf.de}
\author{Hermann Kampermann}
\author{Dagmar Bru\ss}
\author{Gl\'aucia Murta}
\email{glaucia.murta@uni-duesseldorf.de}
\affiliation{Institut f\"ur Theoretische Physik III, Heinrich-Heine-Universit\"at D\"usseldorf, Universit\"atsstra\ss{}e 1, D-40225 D\"usseldorf, Germany}

\begin{abstract}
Entanglement plays a crucial role in the security of quantum key distribution. A secret key can only be obtained by two parties if there exists a corresponding entanglement-based description of the protocol in which entanglement is witnessed, as shown by Curty \textit{et al} (2004) \cite{Curty2004}. Here we investigate the role of entanglement for the generalization of quantum key distribution to the multipartite scenario, namely conference key agreement. In particular, we ask whether the strongest form of multipartite entanglement, namely genuine multipartite entanglement, is necessary to establish a conference key. We show that, surprisingly, a non-zero conference key can be obtained even if the parties share biseparable states in each round of the protocol. Moreover we relate conference key agreement with entanglement witnesses and show that a non-zero conference key can be interpreted as a non-linear entanglement witness that detects a class of states which cannot be detected by usual linear entanglement witnesses.
\end{abstract}

\maketitle

\noindent\textit{Introduction --} Secure communication is a central demand for modern society. Security can be provided by Quantum Key Distribution (QKD) which readily enters the industrial market. In QKD \cite{BB84,E91} entanglement plays a crucial role in the security proofs~\cite{BB84proof,LoProof}. Indeed, even prepare-and-measure protocols~\cite{BB84,sixstate}, which do not require any entanglement for their implementation, have an entanglement-based counterpart~\cite{BBM92} which can be used for the protocol's security analysis. In Ref.~\cite{Curty2004}, the authors showed that entanglement is in fact a necessary condition to obtain a secure key in a QKD protocol and, moreover, the entanglement of the state shared by Alice and Bob can be witnessed using the measurements performed in the protocol.

We consider a generalization of QKD to the scenario where $N$ parties wish to establish a common shared secret key. This task is called conference key agreement (CKA) and allows for secure broadcast. CKA can be achieved using a concatenation of bipartite QKD \cite{bilinks1,bilinks2, bilinks3}, together with additional classical communication. However, the rich structure of multipartite correlations opens the possibility to design new protocols which can have clear advantages in certain network architectures \cite{Epping2017}. Several protocols exploiting the correlations of multipartite entangled  states have been proposed using qubit systems in the device-dependent \cite{cabello2000,chen2004,Epping2017,Federico2018,Federico2019} and device-independent scenario \cite{DI2018,DI2019,TimoComment}, as well as continuous-variables systems \cite{CV2016,Zhang2018,Ottaviani2019}. Even a proof of principle implementation of CKA with four nodes has been recently realized \cite{CKAexperiment}.

Here we ask the question of whether the strongest form of multipartite entanglement, namely genuine multipartite entanglement, is a necessary ingredient for CKA based on multipartite quantum correlations. We will show that, counter-intuitively, this is not the case: $N$ parties can establish a secret conference key even when the state distributed in each round of the protocol is biseparable. Moreover, we prove that, in order to obtain a non-zero conference key, the measurements used in the protocol need to be able to witness entanglement across any partition of the set of parties, extending the result of Ref. \cite{Curty2004} to the multipartite scenario.

\noindent{\textit{Preliminaries --}} We focus on CKA protocols \cite{CKAreview} consisting of several rounds where, in each round, a single copy of a multipartite state is distributed to the $N$ parties, namely Alice and Bob$_1$, \ldots, Bob$_{N-1}$. Upon receiving the systems, the parties perform local measurements and record the classical outcome.

In such protocols, an important figure of merit is the asymptotic secret key rate, i.e. the ratio between the number of extracted secret bits and the number of shared copies of the state, in the limit of an infinite number of rounds. Analogously to the bipartite case \cite{Renner2005,devetak}, the asymptotic secret key rate of the CKA protocols under consideration can be expressed, after the usual post-processing (parameter estimation, one-way information reconciliation and privacy amplification) as \cite{Epping2017}
\begin{equation}\label{keymulti}
r^{\infty}= \max{ \left[ 0, H(X|E) - \max_{i} H(X|Y_i) \right]},
\end{equation}
where $X$ and $Y_i$ 
denote the registers that store the outcomes of the measurements performed by Alice and Bob$_i$, respectively, in the key generation rounds. Here $H(X|E)=H(XE)-H(E)$ is the von Neumann entropy of Alice's outcome in the key generation rounds, conditioned on Eve's (possibly quantum) side information. $H(X|Y_i)=H(XY_i)-H(Y_i)$ represents the amount of information Alice needs to communicate to Bob$_i$ so that he can correct his raw key. The maximum over the Bobs in Eq. \eqref{keymulti} illustrates the fact that Alice needs to communicate enough information to correct for the worst case of the Bobs. We recall that for a state $\rho_X$ of a system $X$, the quantum von Neumann entropy is defined as $H(X)=-\mbox{Tr}[\rho_X \log \rho_X]$.

The conditional von Neumann entropy satisfy the following properties \cite{tom}:
\begin{enumerate}
\item Additivity for product states \cite[Corollary 5.9]{tom}: if $\rho_{AB}=\rho_A \otimes \rho_B$  then $H(A|B) = H(A)$.
\item Data-processing \cite[Corollary 5.5]{tom}: considering $\rho_{ABC}$ then $H(A|BC) \leq H(A|B)$. 
\item Conditioning on classical information \cite[Proposition 5.4]{tom}: if $\rho_{ABF}=\sum_j q_j \rho_{AB}^j \otimes | j \rangle \langle j |_F$ is a classical-quantum state where the system $F$ is a classical register, then $H(A|BF) = \sum_j q_j H(A | B F=j)$ where $H(A | B F=j)$ is evaluated on the state $\rho_{AB}^j$.
\end{enumerate}

Our goal is to investigate the role of multipartite entanglement in the single copy of the state shared by the $N$ parties in each round of the protocol. In the bipartite case either the state is separable and no key can be extracted, or the state is entangled and can potentially be used for QKD~\cite{Curty2004}. In the multipartite scenario, however, different classes of entanglement can be defined, which have been extensively studied \cite{EntHoro,walter2016multipartite,criteria2017,Entrev2019,distreview}. 

Let $S_{\alpha}$ be a proper subset of the parties and $\bar{S}_{\alpha}$ be the complement. Then a state $\rho_{AB_1 \dots B_{N-1}}$ is \emph{separable with respect to the partition} $S_{\alpha}|\bar{S}_{\alpha}$ if it is of the form
\begin{equation}
\label{onesep}
\rho_{AB_1 \dots B_{N-1}}= \sum_j q_j \rho^j_{S_{\alpha}} \otimes \rho^j_{\bar{S}_{\alpha}},
\end{equation}
where $\rho^j_{S_{\alpha}}$ and $\rho^j_{\bar{S}_\alpha}$ are states shared by the parties in $S_\alpha$ and $\bar{S}_\alpha$, respectively, and where $q_j \geq 0$ and $\sum_j q_j =1$.

A state is called biseparable \cite{EntHoro}, if it is a convex combination of states that are separable with respect to different partitions, that is
\begin{equation}\label{bs}
\rho_{bs}=\sum_{S_\alpha} \sum_j q_{S_\alpha}^j \rho^j_{S_{\alpha}} \otimes \rho^j_{\bar{S}_\alpha},
\end{equation}
where the first sum is performed over all proper subsets $S_\alpha$ of the parties. Again, the coefficients must satisfy $q_{S_{\alpha}}^j \geq 0 \; \forall j, S_{\alpha}$ and $\sum_{\alpha} \sum_j q_{S_{\alpha}}^j =1$. It is worth noting that a state can be biseparable, yet not separable with respect to any partition.

Finally, if a state cannot be written in the form of Eq. \eqref{bs} we call it \emph{genuine multipartite entangled (GME)}. All CKA protocols based on multipartite entanglement proposed so far \cite{cabello2000,chen2004,Epping2017,Federico2018,Federico2019,DI2018,DI2019,CV2016,Zhang2018,Ottaviani2019}, explore the correlations of GME states, such as the Greenberger-Horne-Zeilinger (GHZ) state \cite{GHZ} or the W state \cite{W}.

\noindent{\textit{Entanglement is necessary for CKA --}} In the following we prove that entanglement across all partitions in the state shared by the parties is necessary in order to lead to a non-zero asymptotic conference key rate.
\begin{theorem}
\label{teo}
Given a CKA protocol, if the state shared by the $N$ parties is separable with respect to some partition $S_{\alpha}|\bar{S}_{\alpha}$,  then $r_{\infty} = 0$.
\end{theorem}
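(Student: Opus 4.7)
The plan is to exploit the separable structure of Eq.~\eqref{onesep} at the level of a purification held by Eve, and then chain the three properties of conditional entropy listed above to show $H(X|E)\leq H(X|Y_k)$ for some Bob$_k$ sitting across the partition from Alice. Since this immediately implies $H(X|E)\leq \max_i H(X|Y_i)$, Eq.~\eqref{keymulti} gives $r^{\infty}=0$.

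For the setup, I would write the shared state as $\rho_{AB_1\dots B_{N-1}}=\sum_j q_j\,\rho^j_{S_\alpha}\otimes\rho^j_{\bar S_\alpha}$ and single out the convenient purification
\begin{equation*}
|\psi\rangle_{AB_1\dots B_{N-1}E}=\sum_j\sqrt{q_j}\,|j\rangle_F\,|\phi^j\rangle_{S_\alpha E_1}\,|\chi^j\rangle_{\bar S_\alpha E_2},
\end{equation*}
where $|\phi^j\rangle$ and $|\chi^j\rangle$ purify $\rho^j_{S_\alpha}$ and $\rho^j_{\bar S_\alpha}$, and $E=(F,E_1,E_2)$ is handed to Eve with $F$ stored as a classical register. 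Because $S_\alpha$ and $\bar S_\alpha$ are both non-empty and partition the $N$ parties, Alice sits on one side and at least one Bob, say Bob$_k$, sits on the other; I would treat the case Alice$\in S_\alpha$, Bob$_k\in \bar S_\alpha$, the other being identical by symmetry. After the parties measure, the crucial observation is that conditioned on $F=j$ the post-measurement classical-quantum state factorises as $\rho^j_{XE_1}\otimes\rho^j_{Y_k E_2}$, since the measurements act locally within each side of an initial tensor-product state.

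From here the argument is a direct chain of applications of the three properties:
\begin{align*}
H(X|E) &= \sum_j q_j\,H(X|E_1 E_2,F=j) \\
&= \sum_j q_j\,H(X|E_1,F=j) \\
&\leq \sum_j q_j\,H(X|F=j) \\
&= \sum_j q_j\,H(X|Y_k,F=j) \\
&= H(X|Y_k F) \leq H(X|Y_k),
\end{align*}
where property 3 opens and closes the sum over $j$, property 1 (applied to the product structure on each side of the partition) is what removes $E_2$ and later reintroduces $Y_k$, and property 2 (data processing) is used both to discard $E_1$ and finally to discard the classical register $F$. Combined with $H(X|Y_k)\leq\max_i H(X|Y_i)$, this is the desired bound.

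The main subtlety I anticipate is conceptual rather than computational, namely the recognition that the convex-mixture label $j$ of the biseparable decomposition can be promoted to a classical purifying register in Eve's hands. Once this identification is accepted, each step of the chain is a nearly mechanical application of one of the listed properties, with the symmetry argument covering the complementary case where Alice and Bob$_k$ swap sides of the partition.
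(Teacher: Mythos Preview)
Your proof is essentially the same as the paper's, running the same chain of conditional-entropy (in)equalities in the reverse direction. The one technical imprecision is that in the purification you wrote, the single register $F$ is coherent, not classical, so Property~3 cannot be invoked directly on the state $\rho_{XE_1E_2F}$; the paper resolves this by doubling the flag to $|j\rangle_F|j\rangle_{F'}$, so that tracing out $F'$ renders $F$ genuinely classical on the remaining systems (equivalently, you may replace your first equality by ``$\leq$'', justified via data processing by dephasing $F$ on Eve's side, and the rest of the chain is unchanged).
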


\begin{proof}[Proof of Theorem~\ref{teo}]
To prove the statement, since the asymptotic key rate in Eq. \eqref{keymulti} includes an optimization over all the Bobs, it suffices to prove that $H(X|Y_l) \geq H(X|E)$ for a specific Bob$_l$. Let us consider a state separable with respect to a partition $S_{\alpha}|\bar{S}_{\alpha}$, in the form of Eq. \eqref{onesep}, such that $S_\alpha$ contains Alice. We consider a Bob contained in $\bar{S}_{\alpha}$, let us say Bob$_{l}$. Let Eve have a purification of the state of the form
\begin{equation}
|\psi_{AB_1,\dots,B_{N-1}EFF'} \rangle = \sum_{j} \sqrt{q_j} |\psi^j_{S_\alpha \bar{S}_\alpha E} \rangle |j\rangle_F |j \rangle_{F'}, 
\end{equation} 
where $|\psi^j_{S_\alpha \bar{S}_\alpha E} \rangle$  is a  purification of $\rho_{S_{\alpha}}^j \otimes \rho^j_{\bar{S}_\alpha}$ and the systems $F$ and $F'$ are classical registers held by Eve. The additional classical register $F'$ is necessary to exploit the properties of the von Neumann entropy of classical-quantum states. In fact, tracing out the system $F'$, Eve's system $E$ and all the Bobs except $B_l$ will result in a state of the form 
\begin{equation}
\label{cq}
\rho_{AB_lF}= \sum_j q_j \rho^j_{A} \otimes \rho_{B_l}^j \otimes |j \rangle \langle j |_F
\end{equation}
which is a classical-quantum state consisting of a separable state for Alice and Bob $B_l$, paired with the classical register $F$ held by Eve. We remark that performing local measurements on a separable state will result in a separable state. Thus, after the measurements of the CKA protocol the state will still be in the form of Eq. \eqref{cq}. Moreover, we can write the following chain of inequalities:
\begin{align}
 \begin{split}  
H(X|Y_l) & \geq  H(X|Y_l F)   \\
& =  \sum_j q_j H(X | Y_l F=j)   \\
& =  \sum_j q_j H(X | F=j)  \\
& =  H( X | F) \geq H(X| EFF') = H(X |E_{tot}) 
\end{split}
\end{align}
where $E_{tot}$ indicates the global subsystem of Eve, which includes the classical registers. In the first, second and third line we used Property 2, Property 3 and Property 1 of the conditional Von Neumann entropy, respectively. Finally, in the fourth line we used again Properties 2 and 3. This concludes the proof. 
\end{proof}
It follows that there must be some entanglement shared between Alice and all the Bobs in order to establish a secret common key. It is worth noting that for $N=2$ this proof simplifies the argumentation given in Ref. \cite{Curty2004}.

\noindent{\textit{CKA without GME --}} We will now focus on the main question, that is whether a positive conference key can be established without GME. We answer this question in the affirmative by exhibiting a family of biseparable states that can lead to non-zero conference key:
\begin{align}
\begin{split}\label{state}
\rho_{AB_1, \dots, B_{N-1}}^{(N,k)}=&\\
\sum_{\underset{S_\alpha \in \mathcal{S}^{(k)} }{\alpha}} \frac{1}{\mathcal{N}}& \Phi^{GHZ,k}_{S_{\alpha}} \bigotimes_{\underset{B_m \in \bar{S}_{\alpha} }{m}} | + \rangle \langle + |_{B_m},
\end{split}
\end{align}
where $\mathcal{S}^{(k)}$ is the set of subsets of $k$ parties that contain Alice and $k-1$ Bobs, $\Phi^{GHZ,k}_{S_{\alpha}}=|GHZ \rangle\langle GHZ |_{S_{\alpha}}$ is the projector of the GHZ state shared by the $k$ parties of the subset $S_\alpha$, defined as $|GHZ \rangle_{S_\alpha}=\frac{1}{\sqrt{2}} \left( |0 \rangle^{\otimes k} + | 1 \rangle^{\otimes k}\right)$ and $|+\rangle = \frac{1}{\sqrt{2}}\left(|0 \rangle + |1 \rangle \right)$. The normalization factor is equal to $\mathcal{N}=\binom{N-1}{k-1}$ since the number of terms in the convex combination is equal to the number of subsets of cardinality $k-1$ within the $N-1$ Bobs.

We show that this family of states can be used to generate a non-zero key in a simple conference key agreement protocol, namely the N-BB84 protocol \cite{Federico2018}. The N-BB84 protocol consists of $X$-basis measurements for the parameter estimation rounds and $Z$-basis measurements for the key generation rounds.

The asymptotic conference key rate of the N-BB84 protocol for the family of states $\rho_{AB_1, \dots, B_{N-1}}^{(N,k)}$, Eq. \eqref{state}, as a function of the total number of parties $N$ and the number of parties $k$ that are entangled is given by:
\begin{eqnarray}\label{keyrate}
r_{\rm N-BB84}^{\infty}(N,k) & = & \frac{1}{2}\frac{N-k}{N-1}\log_2{\left(\frac{N-k}{N-1}\right)}+ \nonumber \\
& + & \frac{1}{2}\frac{N+k-2}{N-1}\log_2{\left(\frac{N+k-2}{N-1}\right)}.
\end{eqnarray}
A detailed derivation of $r_{\rm N-BB84}^{\infty}(N,k)$ is presented in the Supplemental Material. There, we also show that the key rate given in Eq.~\eqref{keyrate} is  optimal for the family of states \eqref{state},  when the key is generated with measurements in the $Z$ basis.

In Figure \ref{fN} we show the secret key rate as a function of the number of parties $N$ for different values of the number of entangled parties $k$. For comparison, we also plot the key rate of a CKA protocol based on the concatenation of multiple bipartite QKD protocols, in the noiseless scenario, for a network with bottleneck \cite{Epping2017}. In this case, Alice runs $N-1$ bipartite QKD protocols in order to establish a secret key with each of the Bobs.
\begin{figure}[!htbp]
\centering
\includegraphics[height=7.2cm, width=7.2cm, keepaspectratio]{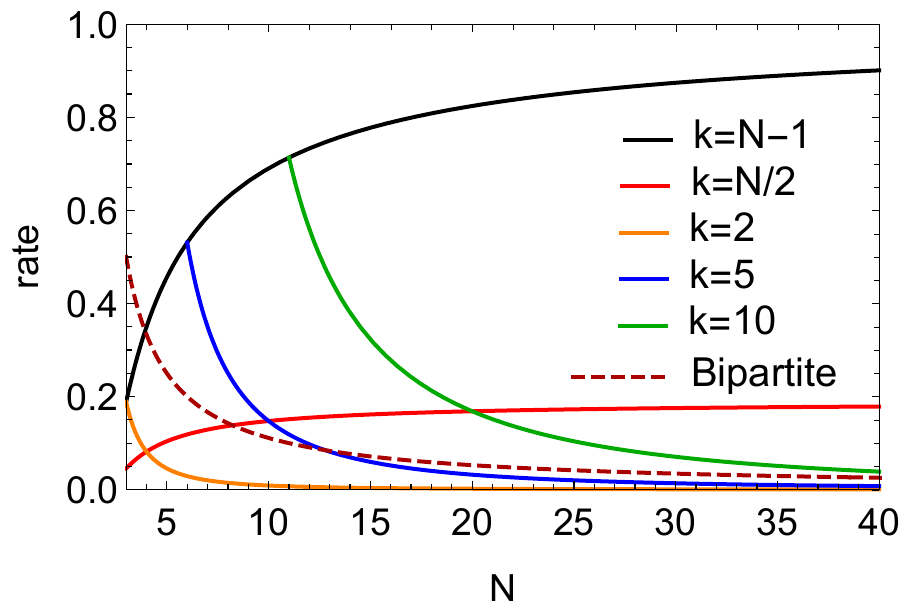}
\caption{Asymptotic secret key rate for the state of Eq. \eqref{state} for different 
values of k (straight lines) and key rate of CKA based on multiple noiseless bipartite QKD protocols (dashed line), both as a function of N. We remark that since $ k \leq N-1$, the curves for fixed $k$ start at different values of $N$.}
\label{fN}
\end{figure}

Figure \ref{fN} shows that $r_{\rm N-BB84}^{\infty}$ approaches $1$ as $N$ increases, if $k$ equals $N-1$. Moreover, even in the case when only $2$ parties, Alice and one of the Bobs, are entangled in each term of the mixture, a non-zero secret key can be obtained. However, for a fixed value of $k$,  $r_{\rm N-BB84}^{\infty} \to 0 $ as $N$ increases. The comparison with the key rate of a concatenation of multiple bipartite QKD protocols yields interesting results: while, on one hand, no advantage can be obtained for $k=2$, on the other hand an advantage can  be obtained in the regime of a  $k$ close to $N$, with a marked advantage for high $k$.

To further analyze the advantage obtainable with the presented protocol compared to the concatenation of bipartite QKD protocols, we evaluate the performance of the family of states \eqref{state} in the presence of noise. We consider the case where the qubit of each Bob undergoes a \emph{local depolarizing channel} $\mathcal{D}$, where $\mathcal{D}[\rho]=(1-p)\rho + p \frac{\mathds{1}}{2}$. We compare this with a concatenation of bipartite QKD protocols that undergo the same type of noise. Details of this analysis can be found in the Supplemental Material. Figure \ref{noisemain} illustrates the result for $N$=6.
\begin{figure}[!htbp]
\centering
\includegraphics[height=7.2cm, width=7.2cm, keepaspectratio]{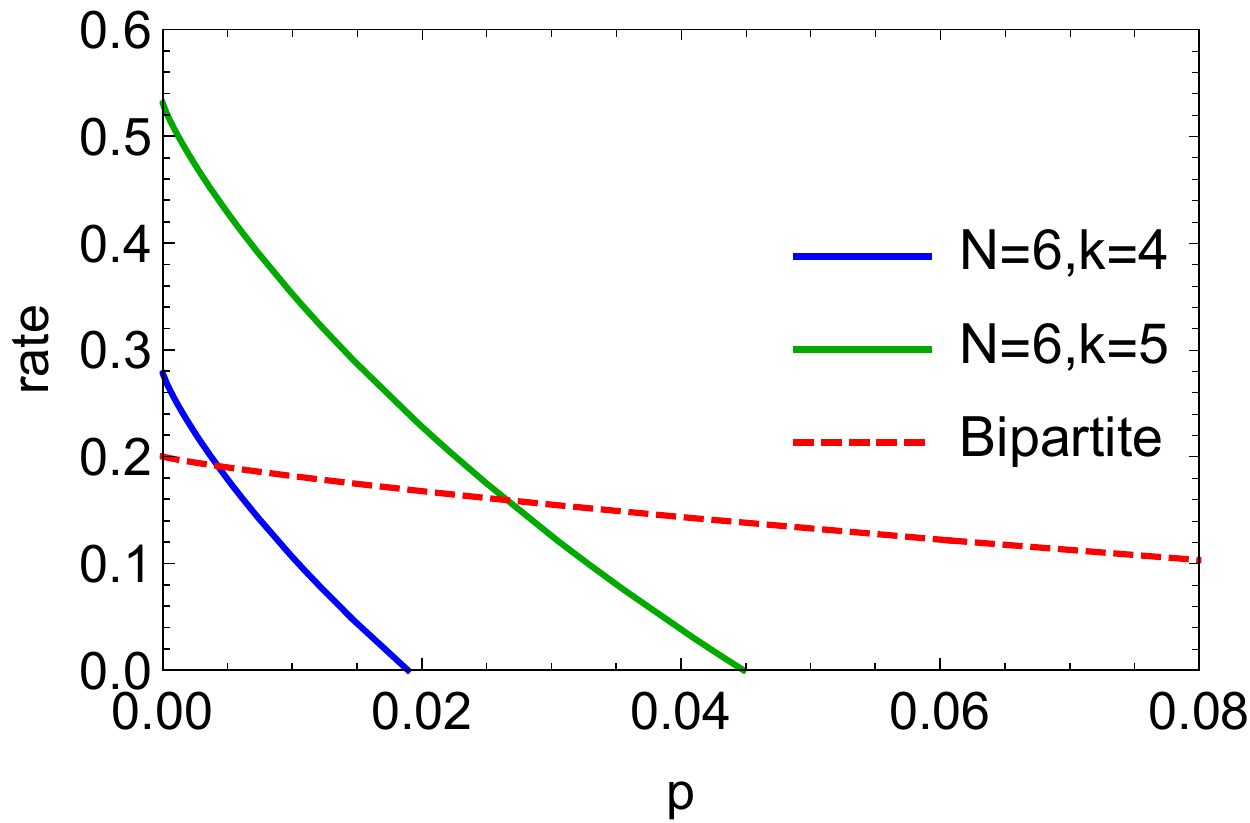}
\caption{Plot of the asymptotic key rate of the N-BB84 protocol for the state of Eq. \eqref{state} undergoing local depolarizing noise (solid lines), as a function of the depolarizing channel parameter $p$, for fixed $N=6$ and different $k$: $k=4$ (blue, left) and $k=5$ (green, right). The results are compared with the key rate of a concatenation of noisy bipartite BB84 QKD protocols (red dashed line).}
\label{noisemain}
\end{figure}
Even in the noisy scenario, an advantage can be obtained in the low noise regime and for $k$ close to $N$.

Our results show that CKA without GME states is possible. We remark that in Ref.~\cite{das2019universal} the authors have established that GME is a necessary condition for non-zero key in a one-shot  conference key agreement protocol. This result, at first, seems in contradiction to our findings, however Ref.~\cite{das2019universal}  refers to the \emph{global} input state, that for the class of protocols we consider would be $\rho_{AB_1 \dots B_{N-1}}^{\otimes n}$, where $n$ is the number of rounds. Since the set of biseparable states is not closed under tensor product, the global input state can be GME even if the single copy of the state is biseparable. Here we focus on analysing the entanglement properties of the single copy of the states. This is because we consider a class of protocols in which the states are distributed and measured at each round, therefore no storage or quantum global operation on all the copies is required.

\noindent\emph{CKA and entanglement witnesses --} Theorem \ref{teo} provides us with a necessary condition to obtain a non-zero key rate in a CKA protocol. We now want to extend to the multipartite scenario the bipartite result presented in Ref. \cite{Curty2004}: no secret key can be extracted in a QKD protocol unless Alice and Bob are able to witness entanglement in the shared state using the measurements performed in the protocol. An entanglement witness \cite{witness2,witness3,distreview} is a Hermitian operator $W$ such that $\mbox{Tr}(W\sigma) \geq 0$ for all separable states $\sigma$ and $\mbox{Tr}(W\rho) < 0$ for at least one entangled state $\rho$. This definition of an entanglement witness is based on the fact that the set of separable states is closed and convex, and can thus be separated with a hyperplane from its complement \cite{HB,witness2}. 
In the multipartite scenario, given the more intricate structure of possible correlations, witnesses can be defined to distinguish different classes of states \cite{distreview}.
We thus consider the same approach of Ref. \cite[Theorem 1]{Curty2004}: starting from the measurements performed by the parties, we analyze the entanglement witnesses that can be constructed with them. We obtain the following theorem.

\begin{theorem}
\label{wit}
Given a CKA protocol in which the parties use a set of local measurements, for the test  and key generation rounds, which are represented by the POVMs $\{G^a_{x}\}, \{G^{b_1}_{y_1}\}, \dots, \{G^{b_{N-1}}_{y_{N-1}}\}$, where $a,b_1, \dots b_{N-1}$ indicate the outputs of the measurements labeled by $x,y_1,\dots,y_{N-1}$, 
then one can obtain a non-zero asymptotic conference key rate $r_\infty > 0$ only if the presence of entanglement can be proved across any partition of the parties into two subsets.

Moreover, the presence of entanglement across each bi-partition can be verified through a set of entanglement witnesses of the form
\begin{equation}
W_\alpha=\sum_{\underset{a, b_1, \dots , b_{N-1}}{x,y_1,\dots,y_{N-1}}} c_{\underset{a, b_1, \dots , b_{N-1}}{x,y_1,\dots,y_{N-1}}}^{(\alpha)} G_{x}^{a} \otimes G^{b_1}_{y_1} \otimes \dots \otimes G^{b_{N-1}}_{y_{N-1}}
\end{equation} 
where $\alpha$ labels the partition $S_\alpha | \bar{S}_\alpha$ with $S_\alpha$ being a proper subset of the parties and $\bar{S}_\alpha$ is its complement, and where $c_{\underset{a, b_1, \dots , b_{N-1}}{x,y_1,\dots,y_{N-1}}}^{(\alpha)}$ are real coefficients.
\end{theorem}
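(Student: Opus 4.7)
My plan is to deduce both parts of Theorem~\ref{wit} from a single key fact, following the logic of Ref.~\cite{Curty2004}: the asymptotic conference key rate $r_\infty$ depends on the shared state $\rho$ only through the joint outcome distribution
\begin{equation*}
P_\rho(a,\vec{b}\,|\,x,\vec{y}) = \mbox{Tr}\bigl[\rho\, G_x^a\otimes G_{y_1}^{b_1}\otimes\dots\otimes G_{y_{N-1}}^{b_{N-1}}\bigr].
\end{equation*}
The first assertion then follows directly from Theorem~\ref{teo}: if the observed statistics $P_\rho$ could be reproduced by some state $\sigma_\alpha$ separable across the bipartition $S_\alpha|\bar S_\alpha$, one could treat $\sigma_\alpha$ as the de facto shared state, and Theorem~\ref{teo} would force $r_\infty=0$. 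Hence $r_\infty>0$ forces $P_\rho$ to be incompatible with \emph{every} $\alpha$-separable state, which is precisely the statement that entanglement across every bipartition can be certified from the measurement data alone.

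For the construction of the witnesses $W_\alpha$ I would invoke a standard hyperplane-separation argument. Fix a bipartition $\alpha$, and let $\mathcal{P}_\alpha$ denote the set of probability tables $P_\sigma$ arising from states $\sigma$ separable across $S_\alpha|\bar S_\alpha$. Since the map $\sigma\mapsto P_\sigma$ is linear and the set of $\alpha$-separable states is convex and compact in the finite-dimensional space supporting the POVM elements, $\mathcal{P}_\alpha$ is convex and closed in the Euclidean space of probability tables. Because $P_\rho\notin\mathcal{P}_\alpha$ by the previous paragraph, the Hahn--Banach separation theorem furnishes real coefficients $c^{(\alpha)}_{x,\vec y,a,\vec b}$ with
\begin{equation*}
\sum_{x,\vec y,a,\vec b} c^{(\alpha)}_{x,\vec y,a,\vec b}\, P_\rho(a,\vec b\,|\,x,\vec y) \;<\; 0 \;\le\; \sum_{x,\vec y,a,\vec b} c^{(\alpha)}_{x,\vec y,a,\vec b}\, P_\sigma(a,\vec b\,|\,x,\vec y)
\end{equation*}
for every $\sigma$ separable across $\alpha$. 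Substituting Born's rule and factoring the trace produces the operator
\begin{equation*}
W_\alpha = \sum_{x,\vec y,a,\vec b} c^{(\alpha)}_{x,\vec y,a,\vec b}\, G_x^a\otimes G_{y_1}^{b_1}\otimes\dots\otimes G_{y_{N-1}}^{b_{N-1}},
\end{equation*}
which, by construction, is non-negative on every $\alpha$-separable state and strictly negative on $\rho$, i.e.\ exactly the witness in the form claimed.

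The main obstacle I anticipate is the rigorous justification of the reduction ``$r_\infty$ depends only on $P_\rho$'' in the multipartite setting, since this is what licenses replacing $\rho$ by any $\sigma_\alpha$ reproducing the same statistics. I would discharge it through Eve's purification: given any decomposition $\sigma_\alpha=\sum_j q_j\rho^j_{S_\alpha}\otimes\rho^j_{\bar S_\alpha}$ compatible with the observed statistics, Eve may mount a coherent intercept-resend attack preparing this decomposition, leaving her with a classical register $F$ identifying $j$. The effective joint state of the honest parties and Eve is then exactly of the form~\eqref{cq}, so Theorem~\ref{teo} enforces $r_\infty=0$ for that attack, and hence for the real protocol by the standard worst-case-over-purifications argument underlying QKD security. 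Once this reduction is in place, the convex-analysis step above is essentially automatic and completes the proof.
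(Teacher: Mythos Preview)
Your proposal is correct and follows essentially the same route as the paper's own proof: invoke Theorem~\ref{teo} to conclude that $r_\infty>0$ forces the observed statistics $P_\rho$ outside the image $\Sigma_\alpha^\Pi$ of every $\alpha$-separable set, then apply hyperplane separation to that convex compact image to extract the coefficients $c^{(\alpha)}$ and hence the witness $W_\alpha$. The one place where you are more explicit than the paper is in isolating the reduction ``$r_\infty$ depends only on $P_\rho$'' as a step requiring justification via the worst-case-over-purifications argument; the paper's proof simply asserts the implication from Theorem~\ref{teo} to $P^*\notin\Sigma_\alpha^\Pi$ without spelling this out, so your treatment is, if anything, slightly more careful on that point.
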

The proof is given in the Supplemental Material. Theorem~\ref{wit} implies that entanglement across any bi-partition can be witnessed using the statistics of results of the measurements specified by the protocol, since the witness operators $W_{\alpha}$ are constructed from the POVM elements of these measurements. Theorem~\ref{wit}, combined with the results of the previous Section, leads to the following Corollary.
\begin{corollary}
The figure of merit $r_\infty > 0$ is a non-linear entanglement witness, detecting the presence of entanglement across any bi-partition of the parties.
\end{corollary}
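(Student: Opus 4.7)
The plan is to read the Corollary as an almost immediate contrapositive reformulation of Theorem~\ref{teo}, combined with an explicit non-triviality example. First I would fix a working definition. By analogy with the linear witnesses discussed in Theorem~\ref{wit}, a non-linear entanglement witness across a given bi-partition $S_\alpha|\bar S_\alpha$ is a functional $f$ on density operators which vanishes (or has a fixed sign) on the convex set of states separable across that partition and takes the strict opposite sign on at least one state outside it, with the caveat that $f$ is allowed to fail to be of the form $\rho\mapsto\mathrm{Tr}(W\rho)$ for any single fixed Hermitian operator $W$.

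Second I would verify that $r_\infty$ is genuinely non-linear in $\rho$. By Eq.~\eqref{keymulti}, $r_\infty$ is built from von Neumann entropies, a maximum over the Bobs and a positive part. Since the entropy is strictly concave and the remaining operations are piecewise linear rather than linear in $\rho$, the functional cannot be rewritten as $\mathrm{Tr}(W\rho)$ for any fixed $W$, ruling out that it is a linear witness in disguise.

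The core of the argument is then to invoke Theorem~\ref{teo} in its contrapositive form: whenever $\rho$ is separable with respect to \emph{some} bi-partition $S_\alpha|\bar S_\alpha$, one has $r_\infty(\rho)=0$. Equivalently, $r_\infty(\rho)>0$ forces $\rho$ to be entangled across \emph{every} bi-partition. Fixing any single partition $\alpha$, this shows that $r_\infty$ vanishes on all states separable across $S_\alpha|\bar S_\alpha$, so it satisfies the defining property of a non-linear witness for that partition; since $\alpha$ was arbitrary, the same functional simultaneously certifies entanglement across all bi-partitions, which is what the Corollary asserts.

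Finally, I would address non-triviality: one must exhibit at least one state on which $r_\infty>0$, or else the witness would be vacuous. This is provided by the biseparable family $\rho^{(N,k)}_{AB_1\ldots B_{N-1}}$ of Eq.~\eqref{state}, whose positive asymptotic N-BB84 key rate is given in Eq.~\eqref{keyrate}. The main obstacle, and really the only conceptual subtlety, is pinning down the notion of a non-linear witness and emphasizing the qualitative gain over Theorem~\ref{wit}: a \emph{single} non-linear functional $r_\infty$ detects entanglement across every bi-partition at once, whereas each linear witness $W_\alpha$ of Theorem~\ref{wit} certifies only the partition $\alpha$ it was designed for. All the genuinely technical work has already been done inside Theorem~\ref{teo}, so the Corollary follows by direct assembly.
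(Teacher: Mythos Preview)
Your argument is correct in outline and reaches the right conclusion, but it diverges from the paper on the one step that carries real content, namely the justification of \emph{non-linearity}.

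You argue analytically: $r_\infty$ is assembled from von Neumann entropies, a maximum over the Bobs, and a positive part, and such a composition cannot be an affine map $\rho\mapsto\mathrm{Tr}(W\rho)$. The paper instead argues geometrically: by Theorem~\ref{wit} (equivalently Theorem~\ref{teo}), $r_\infty$ vanishes on the union $\bigcup_\alpha\Sigma_\alpha$ of all states separable across some fixed bipartition, and this union is non-convex; yet the biseparable family of Eq.~\eqref{state} lies in the convex hull of that union and still has $r_\infty>0$. Any linear witness non-negative on every $\Sigma_\alpha$ is automatically non-negative on the full convex hull, hence on all biseparable states, so no linear $W$ can reproduce the detection achieved by $r_\infty>0$. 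In the paper, then, the biseparable example does double duty---it furnishes both non-triviality and non-linearity---whereas you invoke it only for non-triviality.

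What each approach buys: yours is shorter and self-contained, but strictly speaking it only shows that the \emph{functional} $r_\infty$ is non-affine; it does not exclude that the \emph{criterion} $r_\infty>0$ carves out a halfspace (for instance $f(\rho)=(\mathrm{Tr}(W\rho))^3$ is non-linear while $f>0$ is still a linear condition). The paper's convex-hull argument closes exactly this loophole and is what makes the word ``non-linear'' in the Corollary operationally meaningful rather than merely syntactic. A minor point: the paper credits Theorem~\ref{wit}, you credit Theorem~\ref{teo}; since the former is proved from the latter, your choice is equally valid and arguably more direct.
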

This corollary is due to the result of Theorem \ref{wit} in combination with the examples presented in the previous Section: In fact, the union of all the sets of states that are separable with respect to a specific partition is not a convex set and thus cannot be separated by linear witnesses from its complement \cite{witness2} (see Figure \ref{sepset}). Moreover, if a CKA protocol is performed and a non-zero key rate is obtained, it is a necessary condition that the state shared by the parties is not separable across any partition of the parties. 
Therefore, a non-zero key rate reveals that the state utilized in the protocol is outside of the union of the sets of states that are separable with respect to a fixed partition. Finally, the results of the previous Section tell us that non-GME states can also lead to a non-zero conference key, thus allowing us to conclude that the witness cannot be linear, hence the corollary. 

\begin{figure}[!htbp]
\centering
\includegraphics[height=7.2cm, width=7.2cm, keepaspectratio]{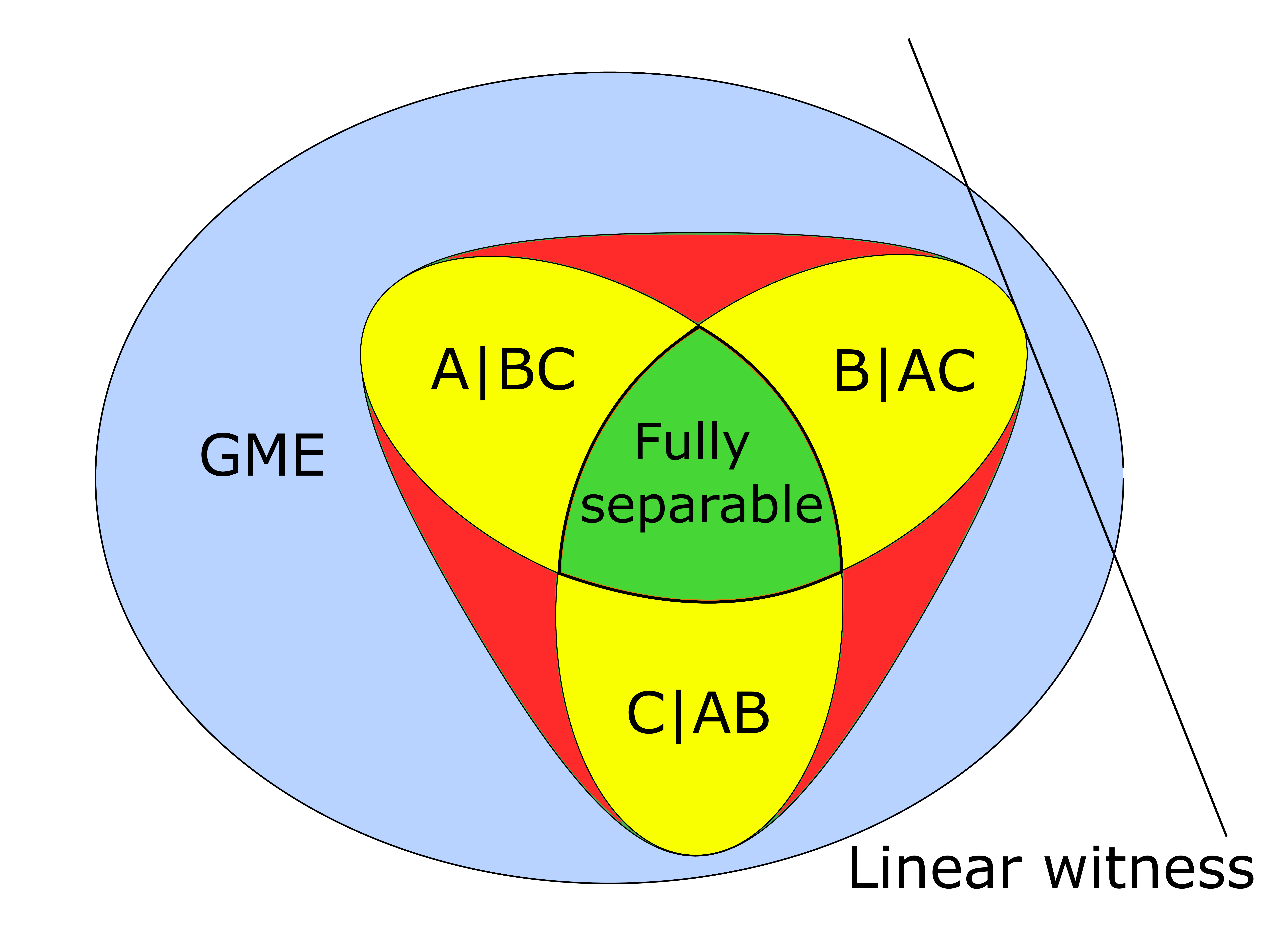}
\caption{(Color online) Schematic representation of the set of tripartite states, adapted from Ref. \cite{setfigure}. In blue is represented the set of GME states. In red is highlighted the set of biseparable states that are not separable with respect to any fixed partition, whereas in yellow are represented the sets of states that are separable with respect to a fixed partition. In green is represented the set of fully separable states. A linear witness defines a hyperplane in the space of states. A non-zero conference key rate can be seen as a non-linear entanglement witness, as it can detect states in the red area, i.e. outside a non-convex set.}
\label{sepset}
\end{figure}

\noindent\emph{Conclusions --} We addressed the question of whether GME is a necessary resource for a conference key agreement protocol. We proved that, surprisingly, the parties can establish a conference key by sharing biseparable states in each round of the protocol. To show this, we exhibited a family of suitable biseparable states, which lead to non-zero key rates in the simple N-BB84 conference key agreement protocol. We showed that, in a network with bottleneck, the key rates achieved by our family of states outperform protocols based on a concatenation of bipartite QKD especially for high numbers of entangled parties.

Furthermore, we related our results to the concept of entanglement witnesses, showing that a non-zero asymptotic conference key rate can only be obtained if one is able to detect entanglement, across any partition, in the state shared by the parties in each round of the CKA protocol. This extends the result of Ref. \cite{Curty2004} for bipartite QKD to the multipartite scenario. As a consequence, we can infer that a non-zero asymptotic conference key rate represents a non-linear entanglement witness, which can detect a type of entanglement that  cannot be detected by the traditional linear entanglement witnesses.

Given our results, several lines of research can follow. For example, it is known that distillation of GHZ states starting from biseparable states is possible \cite{distreview}. Moreover, the GHZ state can be used to generate a perfect conference key. It is an open question whether the considered class of CKA protocols is equivalent to the distillation of a GHZ state from biseparable states. Such a result can lead to converse bounds on the key rates achievable by different classes of multipartite entangled states in the considered CKA protocols.

\section*{Acknowledgements}
We thank F. Grasselli for helpful discussions, and S. Das, S. B\"auml, M. Winczewski and K. Horodecki for clarifying discussions about of the apparent contradiction of our results with Ref.~\cite{das2019universal}. We also thank an anonymous referee for valuable comments that inspired us to strengthen our results. \noindent This work was funded by the Deutsche Forschungsgemeinschaft (DFG, German Research
Foundation) under Germany's Excellence Strategy - Cluster of Excellence
Matter and Light for Quantum Computing (ML4Q) EXC 2004/1 - 390534769.

\bibliography{Bibliography}
\bibliographystyle{apsrev4-2}

\pagebreak
\widetext
\begin{center}
\textbf{\large Supplemental Material: Genuine multipartite entanglement is not a precondition for secure conference key agreement}
\end{center}

\setcounter{equation}{0}
\setcounter{figure}{0}
\setcounter{table}{0}
\setcounter{page}{1}
\makeatletter
\renewcommand{\theequation}{S\arabic{equation}}
\renewcommand{\thefigure}{S\arabic{figure}}

\section{Conference key rate of the N-BB84 protocol with the family of states $\rho_{AB_1, \dots, B_{N-1}}^{(N,k)}$}
As a first step, we briefly sketch the N-BB84 protocol, introduced in Ref. \cite{Federico2018}. The protocol consists of the following steps:
\begin{enumerate}
    \item  A source distributes a state to the $N$ parties.
    \item The parties perform two type of measurements: for the parameter estimation rounds, they make measurements in the $X$ basis. For the key generation rounds they make measurements in the $Z$ basis.
    \item The parties compute the following parameters:
    \begin{itemize}
        \item Using  the outcomes of the parameter estimation rounds the parties compute 
        \begin{equation}
            Q_X=\frac{1-\langle X^{\otimes N}\rangle}{2}
        \end{equation}
        where $\langle X^{\otimes N}\rangle$ is the expectation value of the operator $X$ for each party. $Q_X$ represents the probability that the parties obtain an unexpected result from the parameter estimation rounds.
        \item Using some of the outcomes of the key generation estimation rounds the parties compute
        \begin{equation}
            Q_{AB_i}=\frac{1-\langle Z_{AB_i} \rangle}{2}
        \end{equation}
        where $\langle Z_{AB_i} \rangle$ is the expectation value of the operator $Z$ for Alice and Bob $B_i$. This parameter is computed for each Bob and represents the probability that Alice and Bob $B_i$ get a discordant outcome in the key generation rounds.
    \end{itemize}
    \item The asymptotic key rate is given by
    \begin{equation}
        r_{\rm N-BB84}^{\infty}=1-h(Q_X)-\max_i{h(Q_{AB_i})}
    \end{equation}
    where $h(x)=-x\log_2(x)-(1-x)\log_2(1-x)$ is the binary entropy.
\end{enumerate}

In order to evaluate the performance of the family of states $\rho_{AB_1, \dots, B_{N-1}}^{(N,k)}$, Eq.(7), we need to evaluate the two parameters $Q_X$ and $Q_{AB_i}$. It can be straightforwardly seen that $\rho_{AB_1, \dots, B_{N-1}}^{(N,k)}$, for all $N$ and $k$, is invariant under the application of the $X$ operator on all parties. This implies $\langle X^{\otimes N}\rangle=1$ and thus $Q_X=0$ for any $N$ and $k$.

To calculate $\langle Z_{AB_i} \rangle$ we remark that $\rho_{AB_1, \dots, B_{N-1}}^{(N,k)}$ is a mixture of  $\mathcal{N}= \binom{N-1}{k-1}$ terms, where each of these terms is a projector onto the GHZ state shared by Alice and $k-1$ Bobs, and a projector onto the $|+ \rangle$-state for the remaining Bobs. It is straightforward to see that $\langle Z_{AB_i} \rangle=0$ for the terms in which Bob $B_i$ is not entangled with Alice. On the other hand, the terms in which Bob $B_i$ shares part of the GHZ state with Alice are invariant under the application of the $Z$ operator on Alice and Bob $B_i$, so that we obtain $\langle Z_{AB_i} \rangle=1$ for these terms. Overall, the expectation value $\langle Z_{AB_i} \rangle$ reads
\begin{equation}
    \langle Z_{AB_i} \rangle = \frac{f}{\mathcal{N}}= \frac{k-1}{N-1}.
\end{equation}
where $f=\binom{N-2}{k-2}$ is the number of terms in which Bob $B_i$ shares part of a GHZ state with Alice. We remark that, due to the symmetry of the state, this result holds for any Bob. Thus, dropping the index $i$ we obtain
\begin{equation}
    Q_{AB}(N,k)=\frac{N-k}{2(N-1)}.
\end{equation}
With further, straightforward calculations we obtain
\begin{equation}\label{rateBB84}
    r_{N-BB84}^{\infty}(N,k)=1-h(Q_{AB})=\frac{1}{2}\frac{N-k}{N-1}\log_2{\left(\frac{N-k}{N-1}\right)}+
\frac{1}{2}\frac{N+k-2}{N-1}\log_2{\left(\frac{N+k-2}{N-1}\right)}.
\end{equation}

\section{The N-BB84 protocol is optimal for $Z$ measurements}
In this section we prove that the N-BB84 protocol is the optimal protocol for the family of states $\rho_{AB_1, \dots, B_{N-1}}^{(N,k)}$, when the parties use the $Z$ basis for key generation. We prove this by analyzing the general class of protocol presented in the introduction of the manuscript, thus assuming full state characterization. We show that the key rate of the N-BB84 protocol is identical to the one obtained assuming full tomography of the states $\rho_{AB_1, \dots, B_{N-1}}^{(N,k)}$, thus proving that the N-BB84 protocol is optimal for this family of states.

\subsection{Conditional entropy $H(X|E)$ for the family of states $\rho_{AB_1, \dots, B_{N-1}}^{(N,k)}$}
\label{app1}

Here we will calculate the conditional entropy $H(X|E)$ for a generalization of the family of states $\rho_{AB_1, \dots, B_{N-1}}^{(N,k)}$, as we consider states of the form
\begin{align}
\begin{split}\label{stateq}
\rho_{AB_1, \dots, B_{N-1}}=&\\
\sum_{\underset{S_\alpha \in \mathcal{S}^{(k)} }{\alpha}} q_\alpha & \Phi^{GHZ,k}_{S_{\alpha}} \bigotimes_{\underset{B_m \in \bar{S}_{\alpha} }{m}} | + \rangle \langle + |_{B_m}
\end{split}
\end{align}
where $\Phi^{GHZ,k}_{S_{\alpha}}=|GHZ \rangle\langle GHZ |_{S_{\alpha}}$, is the projector of the GHZ state shared by the parties of the subset $S_\alpha$, defined as $|GHZ \rangle_{S_\alpha}=\frac{1}{\sqrt{2}} \left( |0 \rangle^{\otimes k} + | 1 \rangle^{\otimes k}\right)$ and $|+\rangle = \frac{1}{\sqrt{2}}\left(|0 \rangle + |1 \rangle \right)$. We substituted $\frac{1}{\mathcal{N}}$ with some general real coefficients $q_{\alpha}$ such that $q_\alpha \geq 0 \; \forall \alpha$ and $\sum_\alpha q_\alpha =1$.

We start the explicit calculation of the conditional entropy $H(X|E)$ by writing a purification of the state in Eq. \eqref{stateq}. An explicit valid purification of the state is given by
\begin{equation}
| \psi_{AB_1, \dots, B_{N-1} E} \rangle = \sum_{\underset{S_\alpha \in \mathcal{S}^{(k)} }{\alpha}} \sqrt{q_{\alpha}} | GHZ \rangle_{S_{\alpha}} \bigotimes_{\underset{B_m \in \bar{S}_{\alpha}}{m}} | + \rangle_{B_m} | e_{\alpha} \rangle
\end{equation}
where $\{|e_\alpha \rangle\}_{\alpha}$ is an orthonormal basis of Eve's subsystem of proper dimension. We thus look at the state after Alice performs her measurements on the Pauli $Z$ basis. We obtain the following explicit expression of the state
\begin{align}
\begin{split}
{\rho}_{XB_1, \dots, B_{N-1} E}&=\\ \sum_{\underset{S_\alpha \, , S_\beta \in \mathcal{S}^{(k)} }{\alpha, \beta}}& \frac{1}{2}  \sqrt{q_\alpha q_\beta}  \left(  | 0 \rangle_X \langle 0 | \bigotimes_{B_m \in I_{\alpha,\beta}} | 0 \rangle_{B_m} \langle 0 | \bigotimes_{B_r \in \bar{U}_{\alpha,\beta}} | + \rangle_{B_r} \langle + | \bigotimes_{B_t \in S_\alpha \backslash I_{\alpha,\beta}} | 0 \rangle_{B_t} \langle + | \bigotimes_{B_l \in S_\beta \backslash I_{\alpha,\beta}} | + \rangle_{B_l} \langle 0 | \otimes |e_\alpha \rangle \langle e_\beta | \right. \\
&\quad \quad \left. +  | 1 \rangle_X \langle 1 | \bigotimes_{B_m \in I_{\alpha,\beta}} | 1 \rangle_{B_m} \langle 1 | \bigotimes_{B_r \in \bar{U}_{\alpha,\beta}} | + \rangle_{B_r} \langle + | \bigotimes_{B_t \in S_\alpha \backslash I_{\alpha,\beta}} | 1 \rangle_{B_t} \langle + | \bigotimes_{B_l \in S_\beta \backslash I_{\alpha,\beta}} | + \rangle_{B_l} \langle 1 | \otimes |e_\alpha \rangle \langle e_\beta | \right)
\end{split}
\end{align}
where $I_{\alpha,\beta}= (S_\alpha \cap S_\beta)$ is the intersection and $U_{\alpha,\beta}= S_\alpha \cup S_\beta$ the union between the subsets of the Bobs in $S_\alpha$ and $S_\beta$, $\bar{U}_{\alpha,\beta}$ is the complement of $U_{\alpha,\beta}$ and ${\rho}_{XB_1, \dots, B_{N-1} E}$ indicates the state after Alice's measurement. We can then trace out all the Bobs, which leaves us with Alice and Eve's reduced state in the form
\begin{eqnarray}
\nonumber
\rho_{XE}= \sum_{\alpha,\beta} \frac{1}{2}  \frac{\sqrt{q_\alpha q_\beta}}{2^{k-s_{\alpha,\beta}}} | 0 \rangle_X \langle 0 | \otimes | e_\alpha \rangle \langle e_\beta |+ \sum_{\alpha,\beta} \frac{1}{2}  \frac{\sqrt{q_\alpha q_\beta}}{2^{k-s_{\alpha,\beta}}} | 1 \rangle_X \langle 1 | \otimes | e_\alpha \rangle \langle e_\beta | = \\
= \sum_{\alpha,\beta} E_{\alpha,\beta}  \frac{1}{2}(| 0 \rangle_X \langle 0 | + | 1 \rangle_X \langle 1 |) | \otimes | e_\alpha \rangle \langle e_\beta | = \frac{\mathds{1}_X}{2} \otimes \rho_E
\end{eqnarray}
where $s_{\alpha,\beta}$ is the cardinality of $I_{\alpha,\beta}$, where we defined $E_{\alpha,\beta}=\frac{\sqrt{q_\alpha q_\beta}}{2^{k-s_{\alpha,\beta}}}$ in the second line of the equation and where $\rho_E= \sum_{\alpha,\beta} E_{\alpha,\beta} |e_\alpha \rangle \langle e_\beta |$ is Eve's reduced state. Finally, since $\rho_{XE}$ is a product state, we can use Property 1 of the conditional entropy to write $H(X|E)=H(X)=1$, thus concluding the proof.

\subsection{Conference Key rates for the family of states $\rho_{AB_1, \dots, B_{N-1}}^{(N,k)}$}
\label{app2}

We now evaluate the analytical expression for the asymptotic key rate for the family of biseparable states $\rho_{AB_1, \dots, B_{N-1}}^{(N,k)}$, given by Eq. (7) in the main text. We recall that the number of terms in the convex combination is equal to the number of subsets of cardinality $k-1$ within the $N-1$ Bobs, which is equal to $\mathcal{N}=\binom{N-1}{k-1}$, and that we consider all the coefficients to be equal to $q_{\alpha}=\frac{1}{\mathcal{N}}$. 

To calculate the asymptotic key rate, since $H(X|E)=1$, as proven in Section \ref{app1}, we need to evaluate the leakage $H(X|Y_i) \; \forall$ Bob$_i$ which, with our choice of coefficients, will be equal for all the Bobs. We thus calculate the reduced density matrix of Alice and Bob$_i$ after they perform the key generation measurements, $\rho_{XY_i}$, in order to estimate the leakage term.
Tracing out all the Bobs except one and performing the measurement both on Bob$_i$ and Alice's side gives us the state 
\begin{equation}
\rho_{XY_i}=\frac{1}{2}\frac{f}{\mathcal{N}}(|0 \rangle_X \langle 0 | \otimes |0 \rangle_{Y_i} \langle 0 |+ |1 \rangle_X \langle 1 | \otimes |1 \rangle_{Y_i} \langle 1 |) + (1-\frac{f}{\mathcal{N}})\frac{\mathds{1}_{XY_i}}{4},
\end{equation}
where $f$ is the number of terms in which Bob$_i$ is entangled with Alice in the original state. The number $f$ can be expressed in term of $k$ and $N$ as $f=\binom{N-2}{k-2}$. Thus the reduced density matrix in the computational basis has the form
\begin{equation}
\rho_{XY_i} = 
\left[
\begin{array}{cccc}
\frac{1}{4}(1+C_{N,k})  & 0 & 0 & 0 \\
0 &\frac{1}{4}(1-C_{N,k}) & 0 & 0 \\
0 & 0 &   \frac{1}{4}(1-C_{N,k}) & 0 \\
0 & 0 & 0 & \frac{1}{4}(1+C_{N,k})
\end{array} \right] ,
\end{equation}
where $C_{N,k}=\frac{f}{\mathcal{N}}=\frac{k-1}{N-1}$. Note that the reduced density matrix of Bob$_i$ after the measurement is $\rho_{Y_i}= \frac{\mathds{1}_{Y_i}}{2}$. We therefore obtain
\begin{eqnarray}
r_{\infty}(N,k) & = & 1-H(XY_i)+H(Y_i)  \nonumber \\
& = & \frac{1}{2}\frac{N-k}{N-1}\log_2{\left(\frac{N-k}{N-1}\right)}+ \frac{1}{2}\frac{N+k-2}{N-1}\log_2{\left(\frac{N+k-2}{N-1}\right)}. 
\end{eqnarray}
The key rate obtained with this method is equivalent to Eq. \eqref{rateBB84}, thus proving that the N-BB84 protocol is optimal for $Z$-basis measurements for the key generation rounds.

\section{Noise analysis for the N-BB84 protocol}
In this section we consider a noise model for the N-BB84 protocol with the family of states $\rho_{AB_1, \dots, B_{N-1}}^{(N,k)}$ and compare its performance with a concatenation of bipartite QKD protocols between Alice and $N-1$ Bobs, where all the channels between Alice and the Bobs are noisy. For a fair comparison we thus consider local depolarizing noise. This corresponds to applying the map 
\begin{equation}\label{whitenoise}
    \mathcal{D}[\rho]=(1-p)\rho + p \frac{\mathds{1}}{2}
\end{equation}
to each of the Bobs. The state we will consider will thus be
\begin{equation}\label{noisestate}
    \rho_{AB_1, \dots, B_{N-1}}^{noise}=\mathcal{D}^{\otimes (N-1)}[\rho_{AB_1, \dots, B_{N-1}}^{(N,k)}]
\end{equation}
In this scenario, the parameters of the N-BB84 protocol can be analytically evaluated and read
\begin{eqnarray}
Q_x & = & \frac{1-(1-p)^{N-1}}{2} \\
Q_{AB} & = & \frac{N-1-(1-p)(k-1)}{2(N-1)}
\end{eqnarray}
where, again, we dropped the index $i$ since, due to the symmetry of the state, all $Q_{AB_i}$ are equal. We thus can evaluate analytically the key rate for the N-BB84 protocol, which reads
\begin{align}
    r_{N-BB84}^{\infty}(N,k,p)  =  \frac{1}{2}(1-(1-p)^{N-1})\log_2{(1-(1-p)^{N-1})}+\frac{1}{2}(1+(1-p)^{N-1})\log_2{(1+(1-p)^{N-1})} \nonumber \\  +  \frac{N-1-(1-p)(k-1)}{2(N-1)}\log_2{\left( \frac{N-1-(1-p)(k-1)}{2(N-1)}\right)} \\
     +  \frac{N-1+(1-p)(k-1)}{2(N-1)}\log_2{\left( \frac{N-1+(1-p)(k-1)}{2(N-1)} \right)}
\end{align}

We compare it with the scenario where Alice performs a bipartite BB84 protocol with each of the Bob, sharing a maximally entangled state mixed with white noise, as in Eq. \eqref{whitenoise}. The resulting key rate of a concatenation of bipartite BB84 protocols reads \cite{BB84proof,Epping2017}
\begin{equation}\label{QKD}
    r_{QKD}^{\infty}(N)=\frac{1-2h(\frac{p}{2})}{N-1}
\end{equation}
where we divide the key rate of the bipartite BB84 protocol in the presence of white noise by the number of times Alice must perform the bipartite protocol in order to establish a secure key with each of the $N-1$ Bobs. The results are shown in Figure \ref{noise}.
\begin{figure}[!h]
\centering
{\includegraphics[height=8cm, width=8cm, keepaspectratio]{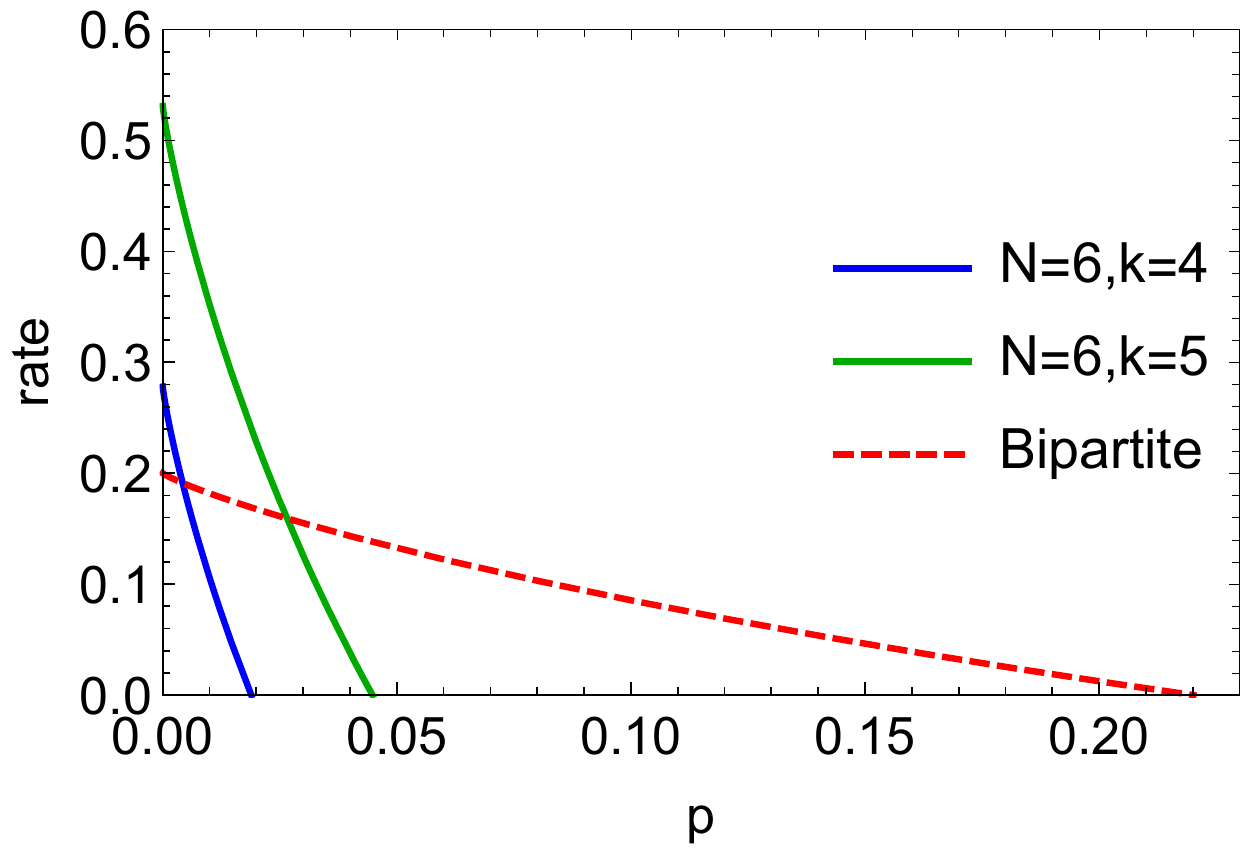}} \quad
{\includegraphics[height=8cm, width=8cm, keepaspectratio]{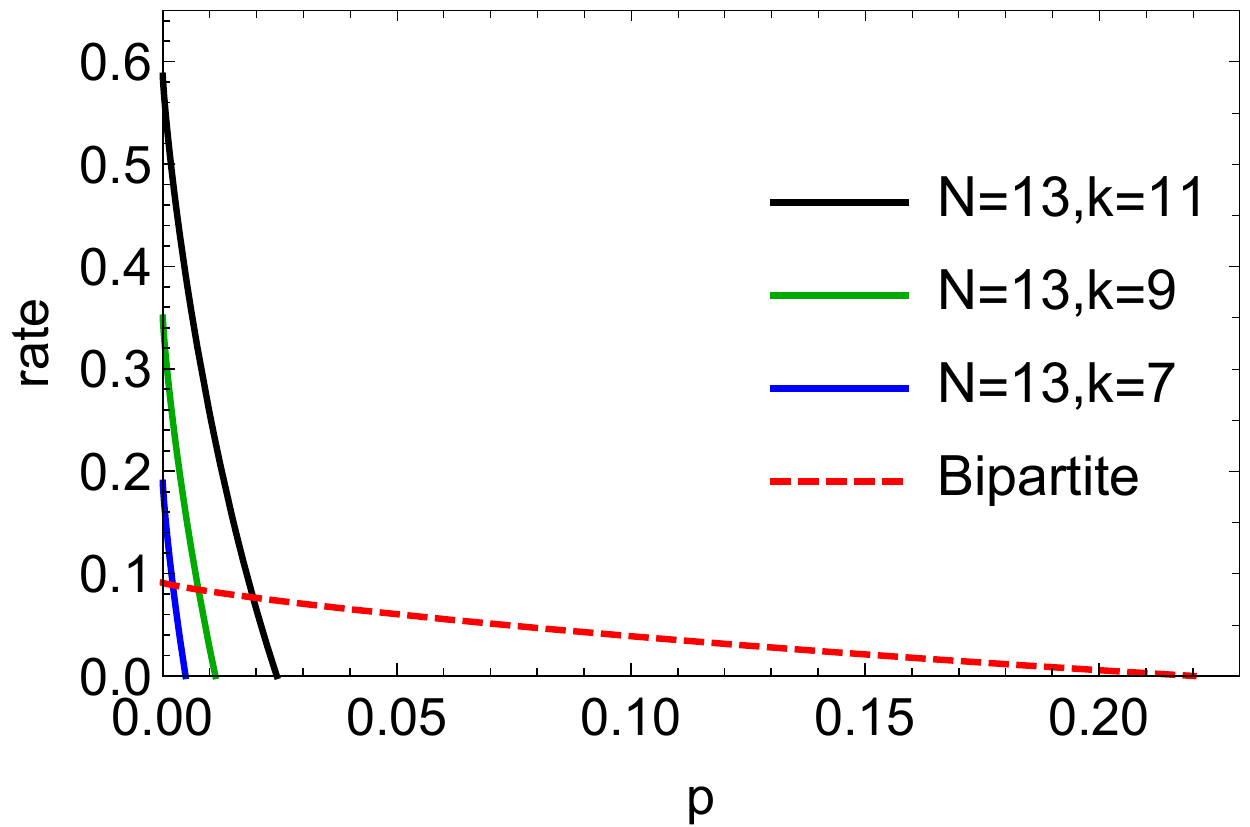}}
\caption{Left panel: plot of the asymptotic key rate of the N-BB84 protocol for the state of Eq. \eqref{noisestate} (solid lines) as a function of $p$, for fixed $N=6$ and different $k$: $k=4$ (blue, left) and $k=5$ (green, right). The results are compared with the key rate of a concatenation of BB84 QKD protocols, given in Eq. \eqref{QKD} (red dashed line), for $N=6$, as a function of $p$. Right panel: plot of the asymptotic key rate of the N-BB84 protocol for the state of Eq. \eqref{noisestate} (solid lines) as a function of $p$, for a fixed value of $N=13$ and different values of $k$: $k=7$ (blue, left), $k=9$ (green, middle) and $k=11$ (black, right). The results are compared with the key rate of a concatenation of BB84 QKD protocols, given in Eq. \eqref{QKD} (red dashed line), for $N=13$, as a function of $p$.}
\label{noise}
\end{figure}
We can see that for some regimes the N-BB84 protocol outperforms a concatenation of bipartite QKD protcols: for low number of parties, we can obtain a marked advantage for $k$ close to $N$ in the low noise regime. Moreover, increasing the number of parties increases the advantage obtained and the range of $k$ for which we can obtain it. However, we note that the N-BB84 protocol has a lower noise tolerance than the concatenation of bipartite QKD protocols, and thus for high noise regimes the latter is always preferred. 

\section{Proof of Theorem 2}
We give here the full proof of Theorem 2. For completeness we repeat the statement of the theorem.

\newtheorem*{thm2}{Theorem 2}

\begin{thm2}
Given a CKA protocol in which the parties use a set of local measurements, for the test  and key generation rounds, which are represented by the POVMs $\{G^a_{x}\}, \{G^{b_1}_{y_1}\}, \dots, \{G^{b_{N-1}}_{y_{N-1}}\}$, where $a,b_1, \dots b_{N-1}$ indicate the outputs of the measurements labelled by $x,y_1,\dots,y_{N-1}$, 
then one can obtain a non-zero asymptotic conference key rate $r_\infty > 0$ only if the presence of entanglement can be proved across any partition of the parties.

Moreover, the presence of entanglement across each partition can be verified through a set of entanglement witnesses of the form
\begin{equation}
W_\alpha=\sum_{\underset{a, b_1, \dots , b_{N-1}}{x,y_1,\dots,y_{N-1}}} c_{\underset{a, b_1, \dots , b_{N-1}}{x,y_1,\dots,y_{N-1}}}^{(\alpha)} G_{x}^{a} \otimes G^{b_1}_{y_1} \otimes \dots \otimes G^{b_{N-1}}_{y_{N-1}}
\end{equation} 
where $\alpha$ labels the partition $S_\alpha | \bar{S}_\alpha$ with $S_\alpha$ being a proper subset of the parties and $\bar{S}_\alpha$ is its complement, and where $c_{\underset{a, b_1, \dots , b_{N-1}}{x,y_1,\dots,y_{N-1}}}^{(\alpha)}$ are real coefficients.
\end{thm2}

\begin{proof}
We start by focusing on the probability distribution of the outcomes $a,b_1, \dots, b_{N-1}$ given the inputs $x, y_1, \dots, y_{N-1}$ of the measurements that can be performed in the test and key generation rounds of the CKA protocol, namely $P(a,b_1, \dots, b_{N-1} | x, y_1, \dots, y_{N-1})$. The probability distributions are obtained as
\begin{equation}
\label{defprob}
P(a,b_1, \dots, b_{N-1} | x, y_1, \dots, y_{N-1})= \mbox{Tr}(G_{x}^{a} \otimes G^{b_1}_{y_1} \otimes \dots \otimes G^{b_{N-1}}_{y_{N-1}} \rho_{AB_1 \dots B_{N-1}}),
\end{equation} 
where $G_x^a$, $G_{y_i}^{b_i}$ are the POVM elements of the measurements performed by Alice and Bob$_i$, respectively.

We analyze the map that maps each state into the corresponding probability distribution, given the measurements of the protocol, that is
\begin{align}
\Pi_{CKA}: \rho_{AB_1 \dots B_{N-1}} \mapsto \{ P(a,b_1, \dots, b_{N-1} | x, y_1, \dots, y_{N-1}) \}
\end{align}
Considering a subset of the Hilbert space, namely $\Sigma$, we call $\Sigma^{\Pi}$ the projection of the subset $\Sigma$ through the map $\Pi_{CKA}$, defined as in Eq. \eqref{defprob}. We now denote the set of states separable across the partition $S_\alpha | \bar{S}_\alpha$ as $\Sigma_\alpha$. We note that $\Sigma_\alpha$ is a closed and convex set. Furthermore, the projection of the set $\Sigma_\alpha$ through the linear map $\Pi_{CKA}$, namely $\Sigma_\alpha^{\Pi}$ is still a closed and convex set. The elements of the projected set represent the probability distributions that come from states that are separable across the partition $S_\alpha | \bar{S}_\alpha$. Due to Theorem \ref{teo}, a necessary condition to obtain a non-zero key rate is that the state is not separable with respect to any partition. This implies that, given a state $\rho^*_{A,B_1, \dots, B_{N-1}}$ that leads to a non-zero key rate in a specific protocol, the corresponding probability distribution $P^*(a,b_1, \dots, b_{N-1} | x, y_1, \dots, y_{N-1})$ is such that  $P^*(a,b_1, \dots, b_{N-1} | x, y_1, \dots, y_{N-1}) \notin \Sigma^{\Pi}_\alpha \; \forall \alpha$. Moreover, since each $\Sigma^{\Pi}_\alpha$ is a convex and compact set, it is a well known fact that each element of its complement $\bar{\Sigma}_\alpha^{\Pi}$ can be separated from $\Sigma^{\Pi}_\alpha$ with a proper hyperplane \cite{HB,witness2}. In the probability space any hyperplane can be defined as
\begin{equation}
\sum_{\underset{a, b_1, \dots , b_{N-1}}{x,y_1,\dots,y_{N-1}}} c_{\underset{a, b_1, \dots , b_{N-1}}{x,y_1,\dots,y_{N-1}}} P(a,b_1, \dots, b_{N-1} | x, y_1, \dots, y_{N-1}) = 0
\end{equation}
where $c_{\underset{a, b_1, \dots , b_{N-1}}{x,y_1,\dots,y_{N-1}}}$ are real coefficients. Furthermore, for each probability distribution $P^*(a,b_1, \dots, b_{N-1} | x, y_1, \dots, y_{N-1}) \notin \Sigma^{\Pi}_\alpha \; \forall \alpha$, we can find, for each partition $S_\alpha | \bar{S}_\alpha$,  coefficients $c^{(\alpha)}_{\underset{a, b_1, \dots , b_{N-1}}{x,y_1,\dots,y_{N-1}}}$, defining hyperplanes such that 
\begin{align}
\label{witprob}
\forall P_\alpha(a,b_1, \dots, b_{N-1} | x, y_1, \dots, y_{N-1})  \in & \Sigma^{\Pi}_\alpha \; \; \sum_{\underset{a, b_1, \dots , b_{N-1}}{x,y_1,\dots,y_{N-1}}} c^{(\alpha)}_{\underset{a, b_1, \dots , b_{N-1}}{x,y_1,\dots,y_{N-1}}} P_\alpha(a,b_1, \dots, b_{N-1} | x, y_1, \dots, y_{N-1}) \geq 0 \nonumber \; \; \mbox{and} \\
\text{for } P^*(a,b_1, \dots, b_{N-1} | x, y_1, \dots, y_{N-1})  \notin & \Sigma^{\Pi}_\alpha \; \forall \alpha, \;\; \sum_{\underset{a, b_1, \dots , b_{N-1}}{x,y_1,\dots,y_{N-1}}} c^{(\alpha)}_{\underset{a, b_1, \dots , b_{N-1}}{x,y_1,\dots,y_{N-1}}} P^*(a,b_1, \dots, b_{N-1} | x, y_1, \dots, y_{N-1}) < 0
\end{align}
Finally, the coefficients define a set of entanglement witnesses in the form
\begin{equation}
W_\alpha=\sum_{\underset{a, b_1, \dots , b_{N-1}}{x,y_1,\dots,y_{N-1}}} c_{\underset{a, b_1, \dots , b_{N-1}}{x,y_1,\dots,y_{N-1}}}^{(\alpha)} G_{x}^{a} \otimes G^{b_1}_{y_1} \otimes \dots \otimes G^{b_{N-1}}_{y_{N-1}}
\end{equation}
such that, due to Eq. \eqref{witprob}, for each $\alpha$
\begin{align}
\label{witne}
\mbox{Tr}(W_\alpha \sigma_\alpha) &\geq 0 \,,\; \forall \sigma_\alpha \in \Sigma_\alpha \nonumber \\
\mbox{Tr}(W_\alpha \rho^*_{A,B_1, \dots, B_{N-1}})& < 0.
\end{align}

As a matter of fact, Eq. \eqref{witne} tells us that the operator $W_\alpha$ is an entanglement witness \cite{distreview,witness2} that detects entanglement across  partition $S_\alpha | \bar{S}_\alpha$. This concludes the proof of the Theorem.
\end{proof}

\end{document}